\documentclass[sigconf]{acmart}

\usepackage{multirow}
\usepackage{tabularx}
\usepackage{subfigure}
\usepackage{cleveref}
\usepackage{hyperref}
\usepackage{graphicx}
\usepackage{balance}
\usepackage{natbib}
\usepackage{subcaption}

\AtBeginDocument{%
  }

\copyrightyear{2026}
\acmYear{2026}
\setcopyright{cc}
\setcctype{by}
\acmConference[RecSys '26]{20th ACM Conference on Recommender Systems}{September 27-October 02, 2026}{Minneapolis, MN, USA}
\acmBooktitle{20th ACM Conference on Recommender Systems (RecSys '26), September 27-October 02, 2026, Minneapolis, MN, USA}
\acmDOI{10.1145/3773078.3831832}
\acmISBN{979-8-4007-2284-4/2026/09}

\begin{document}

%%
%% The "title" command has an optional parameter,
%% allowing the author to define a "short title" to be used in page headers.
\title{Give the Long-tail More SPACE: Promoting Provider Fairness in Next POI Recommendation}

%%
%% The "author" command and its associated commands are used to define
%% the authors and their affiliations.
%% Of note is the shared affiliation of the first two authors, and the
%% "authornote" and "authornotemark" commands
%% used to denote shared contribution to the research.
\author{Anran Zhang}
\email{arzhang@seu.edu.cn}
\orcid{0009-0005-5350-0281}
\affiliation{%
  \institution{School of Computer Science and Engineering,\\ Southeast University}
  \city{Nanjing}
  \state{Jiangsu}
  \country{China}
}

\author{Jiaqi Jiang}
\email{jiaqijiang@seu.edu.cn}
\orcid{0000-0002-7253-9123}
\affiliation{%
  \institution{School of Computer Science and Engineering,\\ Southeast University}
  \city{Nanjing}
  \state{Jiangsu}
  \country{China}
}

\author{Jiahui Jin}
\authornote{Corresponding Authors.}
\orcid{0000-0001-9570-1456}
\email{jjin@seu.edu.cn}
%\orcid{}
\affiliation{%
  \institution{School of Computer Science and Engineering,\\ Southeast University}
  \city{Nanjing}
  \state{Jiangsu}
  \country{China}
}

\author{Yuhan Zhao}
\authornotemark[1]
\orcid{0000-0002-1427-4139}
\email{csyhzhao@comp.hkbu.edu.hk}
\affiliation{%
  \institution{Department of Computer Science, \\Hong Kong Baptist University}
  \city{Hong Kong}
  \country{China}}

%%
%% By default, the full list of authors will be used in the page
%% headers. Often, this list is too long, and will overlap
%% other information printed in the page headers. This command allows
%% the author to define a more concise list
%% of authors' names for this purpose.
\renewcommand{\shortauthors}{Zhang et al.}

%%
%% The abstract is a short summary of the work to be presented in the
%% article.
\begin{abstract}
Next point-of-interest (POI) recommendation predicts users’ future destinations from historical mobility sequences and has become a key component of location-based services. However, mainstream models often concentrate exposure on a small set of popular POIs, leaving long-tail merchants systematically under-exposed. While provider fairness has recently attracted increasing attention, directly applying existing provider-fairness techniques to POI recommendation is problematic: (i) users face \emph{execution constraints}; and (ii) POIs face \emph{resource supply constraints}. 
% These coupled constraints render provider fairness in POI recommendation a fundamentally different—and more challenging—problem than in purely digital settings. 
To address this, we propose \textsc{SPACE} (\textbf{S}upply- and \textbf{P}hysics-\textbf{A}ware \textbf{C}onditional \textbf{E}mbedding generation), a model-agnostic framework that improves long-tail POI exposure via \emph{virtual user generation} under explicit feasibility and supply control. \textsc{SPACE} consists of three stages: (1) community inference to capture heterogeneous user execution constraints; (2) unbalanced optimal-transport allocation to decide how many virtual users each tail POI should receive from which communities under POI-specific supply budgets; and (3) constraint-guided latent diffusion to generate POI-conditional, community-consistent virtual user embeddings. The generated user--POI pairs can be seamlessly used to train existing recommenders without modifying their architectures. Extensive experiments on three real-world datasets demonstrate that \textsc{SPACE} substantially improves provider fairness while maintaining—and often improving—recommendation accuracy across multiple backbone models. Our code is publicly available at \url{https://github.com/Anniran1/SPACE-main}.
\end{abstract}

%%
%% The code below is generated by the tool at http://dl.acm.org/ccs.cfm.
%% Please copy and paste the code instead of the example below.
%%
\begin{CCSXML}
<ccs2012>
   <concept>
       <concept_id>10002951</concept_id>
       <concept_desc>Information systems</concept_desc>
       <concept_significance>500</concept_significance>
       </concept>
   <concept>
%        <concept_id>10002951.10003317.10003347.10003350</concept_id>
       <concept_desc>Information systems~Recommender systems</concept_desc>
       <concept_significance>500</concept_significance>
       </concept>
 </ccs2012>
\end{CCSXML}
\ccsdesc[500]{Information systems}
\ccsdesc[500]{Information systems~Recommender systems}

\keywords{Next Point-of-interest Recommendation, Fairness}

%% A "teaser" image appears between the author and affiliation
%% information and the body of the document, and typically spans the
%% page.

%%
%% This command processes the author and affiliation and title
%% information and builds the first part of the formatted document.
\maketitle

\section{Introduction}
Next point-of-interest (POI) recommendation aims to predict users' future destinations from their behavioral sequences. While modern recommenders \cite{huang2025poi,sun2024going} enhance user experience by delivering accurate suggestions, they can simultaneously impose an unintended cost on long-tail merchants whose exposure falls far below the average \cite{yin2012challenging}, resulting in systematically inequitable visibility. In response, \emph{provider fairness} has emerged as a central research topic, seeking to democratize exposure opportunities for underrepresented providers \cite{guo2025enhancing, singh2018fairness}. 
\begin{figure}[t]
    \centering
    \includegraphics[width=\columnwidth]{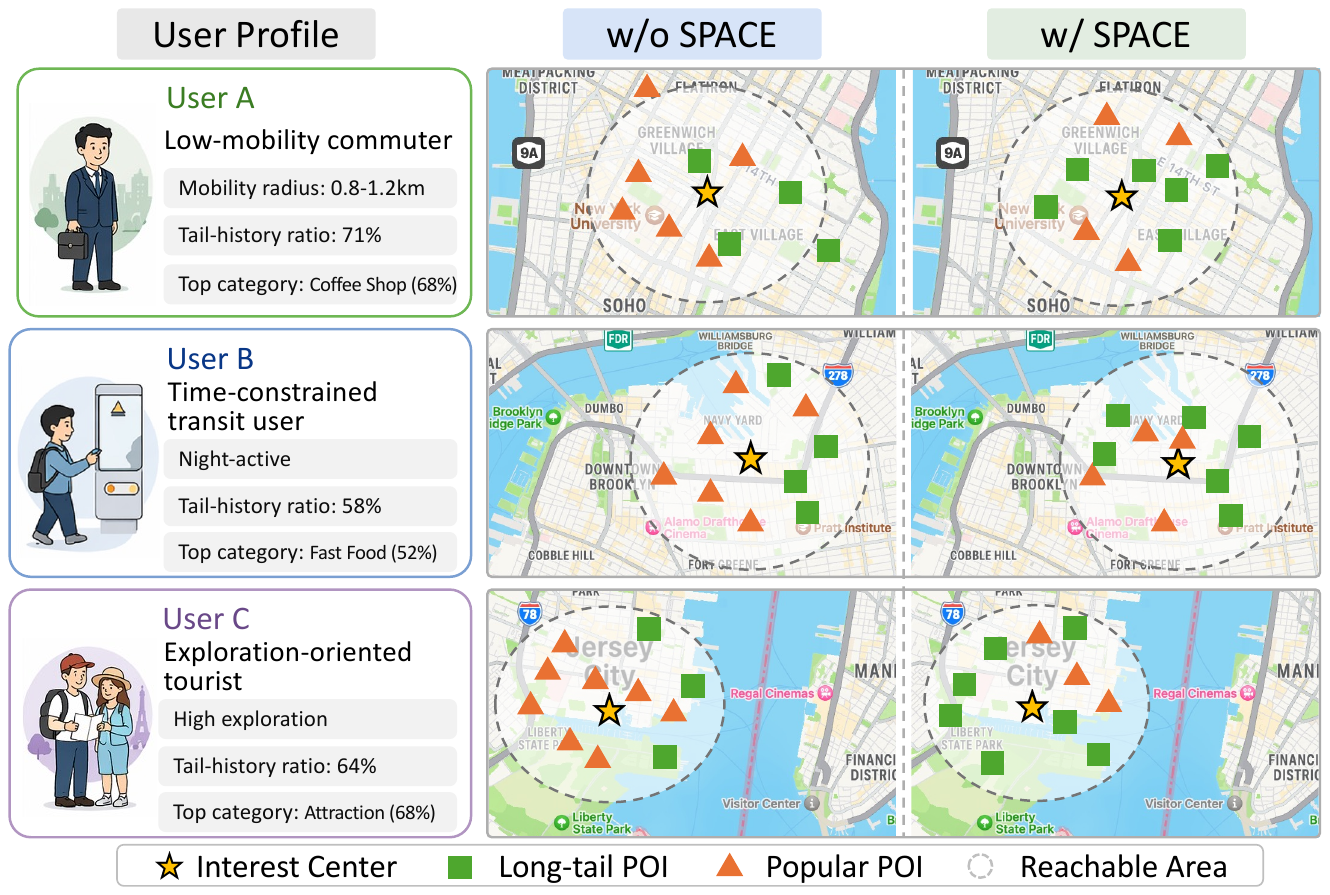}
    \caption{Recommendation case studies on three types of constrained users. The circles on the maps define the reachable radius for constrained users. The original model (w/o SPACE) tends to concentrate exposure on popular POIs, even when such recommendations are not fully aligned with users’ constraints. In contrast, after incorporating SPACE (w/SPACE), the recommendation list contains more long-tail POIs that are both consistent with user preferences and physically reachable within their mobility budgets.}
    \label{fig:case study}
\end{figure}

Beyond algorithmic parity, provider fairness in POI settings is intertwined with \emph{spatial justice} and \emph{urban sustainability}. As depicted in \figurename~\ref{fig:case study}, the conventional next POI recommendation model tends to recommend popular POIs with high historical exposure for users. Such feedback loops amplify "rich-get-richer" effects, funneling crowds into hotspots while accelerating regional polarization. These bias also confines users to homogenized, overcrowded enclaves, creating informational silos that hinder the discovery of local culture. Therefore, ensuring provider fairness in POI recommendation is not merely an optimization preference but a sociotechnical imperative. However, directly transplanting existing provider-fairness techniques into the POI domain is ill-suited. Unlike purely digital recommendation settings \cite{chen2025leave, zhao2026double}, POI recommendation is governed by two distinctive constraints:
\begin{itemize}
\item \textbf{User execution constraints.} Unlike online actions (e.g., clicks), physical visits require tangible resources—time, energy, and money—which are unevenly distributed across users. Consequently, the burden of a “fair but unsuitable” recommendation is inherently asymmetric: a car owner may accommodate a detour with minimal friction, whereas a tourist relying on public transportation can incur substantial time loss and utility degradation if redirected to a distant location merely to satisfy a fairness metric.
\item \textbf{Resource supply constraints.} Physical venues operate under hard capacity and availability limits. For instance, niche museums may have limited ticket quotas due to space constraints. Naively boosting the exposure of tail POIs to achieve statistical fairness can thus lead to over-recommendation, creating congestion that degrades visitor experience and overwhelms providers’ operational capacity.
\end{itemize}

These constraints make POI provider fairness uniquely complex, meaning conventional fairness methods can severely compromise accuracy. For instance, directly promoting certain providers through re-ranking may ignore physical feasibility or supply limits, leading to recommendations that users cannot visit or providers cannot accommodate. Such mismatches ultimately feed back into degraded user satisfaction and impaired system accuracy—outcomes we explicitly seek to avoid. 

To address this challenge, we propose \textsc{SPACE} (\textbf{S}upply- and \textbf{P}hysics-\textbf{A}ware \textbf{C}onditional \textbf{E}mbedding generation), a model-agnostic framework that improves long-tail POI exposure by \emph{generating physically plausible virtual users} under explicit supply- and execution-aware constraints. Instead of manipulating rankings in a way that may conflict with real-world feasibility, \textsc{SPACE} augments training signals for tail POIs through constraint-respecting interaction synthesis: it (i) infers user communities that reflect heterogeneous mobility/execution constraints, (ii) allocates augmentation quotas to tail POIs under explicit supply budgets via unbalanced optimal transport, and (iii) generates POI-conditional virtual user embeddings through constraint-guided latent diffusion, so that downstream recommenders can be trained with improved fairness while preserving (and often improving) accuracy, as shown in \figurename~\ref{fig:case study}.

Our main contributions are summarized as follows:
\begin{itemize}
\item We formalize \emph{provider fairness} in POI recommendation as a \emph{constraint-coupled} problem. We show that, unlike conventional recommendation settings, both users and POIs in POI recommendation are subject to intrinsic physical and supply limitations, which necessitate rethinking provider fairness for this domain and lay a foundation for future work.
\item We propose \textsc{SPACE}, a supply- and physics-aware conditional embedding generation framework that democratizes tail-POI exposure via virtual user generation. \textsc{SPACE} integrates community inference, unbalanced-OT-based supply-constrained quota allocation, and constraint-guided latent diffusion to address domain-specific physical and supply constraints.
\item Extensive experiments on three real-world datasets demonstrate that \textsc{SPACE} substantially improves provider fairness for various backbone recommenders \emph{without sacrificing accuracy}—and in most cases even \emph{improves} it. These impressive results highlight strong practical value for real POI platforms.
\end{itemize}
\section{Preliminaries}
\label{sec:problem_formulation}

In this section, we formally define the provider fairness in next-POI recommendation.
We then theoretically demonstrate why fairness paradigms based on unconstrained continuous optimization can become infeasible and even harmful.

\subsection{Problem Definition}

\textbf{Next-POI Recommendation.}
Let $\mathcal{U}=\{u_1,\ldots,u_{|\mathcal U|}\}$ be the set of users and $\mathcal{V}=\{v_1,\ldots,v_{|\mathcal V|}\}$ be the set of POIs.
Each POI $v\in\mathcal V$ is associated with geographic coordinates and semantic attributes. A check-in is a tuple $q=(u,v,t)$ indicating that user $u$ visits POI $v$ at timestamp $t$. For a user $u$, their trajectory is $\mathcal{T}_u=\{q_u^1,\ldots,q_u^{n_u}\}$.
Given $\mathcal{T}_u$, the next POI recommendation aims to predict a list of possible POIs that the user is inclined to visit next. We denote the predicted matching utility between $u$ and $v$ as $\hat{y}_{u,v}\in\mathbb{R}^{+}$.

\noindent\textbf{Provider Fairness in POI Recommendation.}
Provider fairness aims to mitigate exposure bias against marginalized long-tail POIs. Let $x_{u,v}\in\{0,1\}$ indicate whether POI $v$ is included in user $u$'s top-$K$ recommendation list.
Thus, each recommendation request must satisfy
$\sum_{v\in\mathcal V} x_{u,v}\le K$.
Let $\mathcal{V}_{\text{tail}}\subset\mathcal V$ denote the set of long-tail POIs, following the definition in existing research\cite{yin2012challenging}. We use a simple exposure proxy
$E(v)=\sum_{u\in\mathcal U} x_{u,v}$,
and define the average exposure of non-tail POIs as
$\bar{E}=\frac{1}{|\mathcal V\setminus\mathcal V_{\text{tail}}|}\sum_{v\in\mathcal V\setminus\mathcal V_{\text{tail}}}E(v)$.
A common exposure-based provider-fairness requirement enforces that each tail POI should receive at least a fraction of this average exposure:
\begin{equation}
    E(v)=\sum_{u\in\mathcal U} x_{u,v} \ge \tau\cdot \bar{E},\quad \forall v\in\mathcal V_{\text{tail}},
\end{equation}
where $\tau\in(0,1]$ controls the strictness of the fairness requirement.

\subsection{Physical Constraints and Realized Utility}

Unlike purely digital platforms, POI recommendation is constrained by physical feasibility on both the user side and the POI side.

\begin{itemize}
    \item \textbf{User execution constraint.}
    Let $c_{u,v}$ denote the execution cost for user $u$ to visit POI $v$ (e.g., travel time/distance under a transportation mode), and let $B_u$ be the user's execution budget.
    A recommendation is executable only if $c_{u,v}\le B_u$.
    \item \textbf{Resource supply constraint.}
    Let $S_v$ denote the service/supply capacity of POI $v$ over the considered horizon (e.g., maximum acceptable visits or ticket capacity).
    A physically meaningful recommendation allocation should respect
    $\sum_{u\in\mathcal U} x_{u,v}\le S_v$.
\end{itemize}

These constraints distinguish \emph{digital expected utility} from \emph{physical realized utility}:
even if a model assigns high $\hat{y}_{u,v}$, recommending an unexecutable POI produces ``fake exposure'' that cannot translate into real visits.

\begin{definition}[Realized System Utility]
Given an allocation matrix $\mathbf X=[x_{u,v}]$, the realized system utility is
\begin{equation}
U_{\mathrm{real}}(\mathbf X)
=
\sum_{u\in\mathcal U}\sum_{v\in\mathcal V}
\hat{y}_{u,v}\, x_{u,v}\, \mathbb{I}(c_{u,v}\le B_u),
\end{equation}
where $\mathbb{I}(\cdot)$ is the indicator function.
The goal is to maximize $U_{\mathrm{real}}(\mathbf X)$ subject to constraints and fairness requirements.%top-$K$ constraints, supply constraints, and fairness requirements.
\end{definition}

\subsection{Theoretical Analysis of Existing Paradigms}

While many fair recommenders rely on soft regularization and end-to-end optimization, we show that POI recommendation is a combinatorial, NP-hard problem under physical constraints. Consequently, ignoring execution feasibility can lead to unbounded utility degradation.

\begin{theorem}[NP-Hardness of Physically Feasible Fair Allocation]
\label{thm:intractability}
With heterogeneous execution budgets $\{B_u\}$ and POI capacities $\{S_v\}$, the problem of finding an allocation that maximizes realized utility subject to physical feasibility is NP-hard; adding exposure-based provider fairness constraints does not simplify the problem.
Consequently, continuous soft-penalty optimization does not provide feasibility guarantees for the resulting discrete top-$K$ allocations.
\end{theorem}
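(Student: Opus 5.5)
The plan is a polynomial-time reduction from a classical NP-hard problem, followed by a separate argument for the two remaining clauses of Theorem~\ref{thm:intractability}.

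\emph{First, fix the interpretation of the execution constraint.} If $c_{u,v}\le B_u$ is read purely per pair, the problem without fairness constraints is a bipartite $b$-matching. Users have degree bound $K$, POIs have degree bound $S_v$, and edges are only the executable pairs. Its constraint matrix is totally unimodular, so it is solvable by min-cost flow in polynomial time. Adding the lower bounds $E(v)\ge\tau\bar E$ with a \emph{fixed} target keeps this a flow problem. The only source of hardness under this reading is that $\bar E$ itself depends on $\mathbf X$, which couples all POI columns and destroys total unimodularity. I expect this to be the main obstacle: one must either prove hardness from that coupling, or adopt the reading in which a user's budget is shared across the whole recommended list (a trip or itinerary):
\[
\sum_{v} c_{u,v}x_{u,v}\le B_u .
\]
I would adopt the cumulative-budget reading and state explicitly that it is needed for the theorem as claimed.

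\emph{Reduction.} Start from 0-1 Knapsack, with items $i=1,\dots,n$, values $p_i$, weights $w_i$, capacity $W$, and target value $P$. Build an instance with a single user $u$ and POIs $v_1,\dots,v_n$. Set $\hat y_{u,v_i}=p_i$, $c_{u,v_i}=w_i$, $B_u=W$, $K=n$, and $S_{v_i}=1$. Every $x_{u,v_i}\in\{0,1\}$ satisfies the top-$K$ and supply constraints, and the indicator in $U_{\mathrm{real}}$ equals $1$ for every selected item because each $w_i\le W$ (wlog). Hence feasible allocations with realized utility at least $P$ correspond exactly to knapsack solutions of value at least $P$. The construction is clearly polynomial, so the decision version is NP-hard; many-user instances contain this as a special case. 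Heterogeneous budgets $\{B_u\}$ and capacities $\{S_v\}$ are used only to the extent of allowing arbitrary $B_u$ and $S_v$.

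\emph{Fairness does not help.} Take $\mathcal V_{\text{tail}}=\emptyset$, or any $\tau$ small enough that the requirement is vacuous. The fairness-constrained problem then contains the unconstrained one as a special case, so it is at least as hard.

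\emph{Consequence for soft penalties.} Consider $\min_{\mathbf X\in[0,1]}\; -\sum \hat y_{u,v}x_{u,v}+\lambda\cdot\mathrm{penalty}$ followed by top-$K$ rounding. I would exhibit a two-POI, one-user example. A tail POI $v_t$ has $c_{u,v_t}>B_u$. A head POI $v_h$ has $c_{u,v_h}\le B_u$. Take $K=1$. For large $\lambda$, the fairness penalty forces the relaxed solution to put mass on $v_t$, so rounding selects $v_t$. This gives $U_{\mathrm{real}}=0$, whereas selecting $v_h$ is feasible with utility $\hat y_{u,v_h}>0$; scaling $\hat y_{u,v_h}$ makes the gap unbounded. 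More generally, unless $\mathrm{P}=\mathrm{NP}$, no polynomial continuous relaxation with rounding can guarantee that the resulting discrete top-$K$ allocation is feasible and optimal.
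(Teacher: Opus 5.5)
Your route differs from the paper's, and your opening paragraph identifies exactly where the paper's own proof fails. Your core reduction is correct, but for a modified model. The paper sets $K=1$, drops fairness, and folds executability into $w_{u,v}=\hat y_{u,v}\mathbb{I}(c_{u,v}\le B_u)$. It then calls the resulting program ($\sum_v x_{u,v}=1$, $\sum_u x_{u,v}\le S_v$, $x_{u,v}\in\{0,1\}$) a generalized assignment problem, hence NP-hard. It is not: each user consumes one unit of capacity, so this is a transportation ($b$-matching) problem with a totally unimodular constraint matrix, solvable by min-cost flow. GAP is hard only because of non-unit sizes $\sum_u a_{u,v}x_{u,v}\le S_v$.

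The paper's parenthetical that removing constraints cannot make a problem harder is also not a valid general principle. What is needed is your explicit embedding of the fairness-free problem as a special case of the constrained one. Your Knapsack reduction under the cumulative budget $\sum_v c_{u,v}x_{u,v}\le B_u$ is correct. It proves hardness for a different model from the paper's, so state it that way. A cumulative budget fits itinerary recommendation better than a next-POI top-$K$ list, from which the user executes only one option. If you prefer to keep per-pair executability and put the hardness on the supply side, as the paper apparently intended, let a visit by $u$ consume $a_{u,v}$ units of $S_v$ (e.g.\ party size). With $K=1$ this is GAP verbatim, and it gives strong rather than merely weak NP-hardness.

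You can also drop your hedge about the coupling through $\bar E$, because it creates no hardness. Enumerate the integer $T=\sum_{v\notin\mathcal V_{\text{tail}}}E(v)$, $0\le T\le|\mathcal U||\mathcal V|$. For fixed $T$, fairness becomes fixed lower bounds $E(v)\ge\lceil\tau T/|\mathcal V\setminus\mathcal V_{\text{tail}}|\rceil$ on tail POIs. Add the requirement that exactly $T$ units of flow pass through an aggregator node collecting the non-tail POIs. The result is a min-cost flow with lower bounds, which is integral and polynomial. So under the paper's per-pair model the whole problem, fairness included, is polynomial, and the statement as written fails unless $\mathrm{P}=\mathrm{NP}$. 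Your change of model is necessary, not just convenient.

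Two smaller repairs. First, a small $\tau$ does not make fairness vacuous. Since $\tau\in(0,1]$, a tail POI with $E(v)=0$ violates $E(v)\ge\tau\bar E$ whenever $\bar E>0$. Use $\mathcal V_{\text{tail}}=\emptyset$ instead. Alternatively, declare the knapsack POIs non-tail and add one tail POI $v_0$ with zero cost, zero utility and capacity $|\mathcal U|$. It can be added to every list at no cost (take $K=n+1$), so $E(v_0)=|\mathcal U|\ge\tau\bar E$ always.

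Second, with the natural penalty $\lambda(\tau\bar E-E(v_t))_+$, your one-user example does not behave as claimed. Relax to $x_{u,v_h}+x_{u,v_t}\le1$ with $\hat y_{u,v_h}>\hat y_{u,v_t}$; otherwise no penalty is needed to pick $v_t$. For large $\lambda$ the optimum is $x_{u,v_t}=\tau/(1+\tau)<x_{u,v_h}$ for $\tau<1$. Top-$1$ rounding then returns the executable $v_h$ and violates the exposure requirement instead. To get the failure you describe, use a fixed target $\bar E\equiv1$ with $\tau>1/2$, which gives $x_{u,v_t}=\tau>x_{u,v_h}$. Or take $M$ users as in Theorem~\ref{thm:pof}: there every full top-$1$ assignment meeting the requirement sends at least a $\tau/(1+\tau)$ fraction of users to the unexecutable $v_t$. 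Some separate argument of this kind is needed. The paper gives none for this clause, and NP-hardness of the optimization cannot supply one, since the empty list is always feasible.
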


\begin{proof}
We prove NP-hardness by reduction from a capacitated assignment problem.
Consider a simplified setting where each user must be assigned to exactly one POI (i.e., set $K=1$), and ignore the fairness constraints (removing constraints cannot make the problem harder).
Let $w_{u,v}=\hat{y}_{u,v}\cdot \mathbb{I}(c_{u,v}\le B_u)$ be the realized profit of assigning $u$ to $v$.
Then the decision problem becomes:
\begin{align}
\max_{\mathbf X}\ &\sum_{u\in\mathcal U}\sum_{v\in\mathcal V} w_{u,v}x_{u,v} \\
\quad
\text{s.t.}\quad
&\sum_{v\in\mathcal V}x_{u,v}=1,\ \forall u;\qquad \\
&\sum_{u\in\mathcal U}x_{u,v}\le S_v,\ \forall v;\qquad
x_{u,v}\in\{0,1\}.
\end{align}
This is a standard capacitated assignment / generalized assignment formulation, which is NP-hard in general.
Therefore, our original problem with top-$K$ recommendation ($K\ge 1$) and additional fairness constraints is also NP-hard.
\end{proof}

\begin{theorem}[Unbounded Price of Fairness Under Execution Infeasibility]
\label{thm:pof}
Enforcing exposure-based provider fairness without explicitly accounting for user execution feasibility $\mathbb{I}(c_{u,v}\le B_u)$ can yield an unbounded Price of Fairness (PoF) in physical space, leading to a potentially catastrophic drop in realized utility.
\end{theorem}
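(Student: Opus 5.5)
The argument is an explicit family of instances on which the ratio
\[
\mathrm{PoF}=\frac{U_{\mathrm{real}}(\mathbf X^{\star})}{U_{\mathrm{real}}(\mathbf X^{\mathrm{fair}})}
\]
diverges. Here $\mathbf X^{\star}$ is the optimal physically feasible allocation without fairness constraints. $\mathbf X^{\mathrm{fair}}$ is any allocation meeting the exposure constraint $E(v)\ge\tau\bar E$ that is selected by maximizing expected (digital) utility $\sum\hat y_{u,v}x_{u,v}$ while ignoring $\mathbb I(c_{u,v}\le B_u)$. If $U_{\mathrm{real}}(\mathbf X^{\mathrm{fair}})=0$ while $U_{\mathrm{real}}(\mathbf X^{\star})>0$, the PoF is $+\infty$. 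Equivalently, a parametrized family can be built in which the ratio grows without bound.

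\textbf{The instance.} Take $K=1$, $n$ users, one head POI $h$ with $S_h\ge n$, and one tail POI $t$ with $\mathcal V_{\mathrm{tail}}=\{t\}$. Set $\tau=1$. With only two POIs and $K=1$, the constraint $E(t)\ge\tau\bar E=E(h)$ forces at least half of the users to be assigned to $t$. Place $t$ far away: $c_{u,t}>B_u$ for every $u$, and $c_{u,h}\le B_u$. Set the scores as $\hat y_{u,h}=1$ and $\hat y_{u,t}=1-\epsilon$ for small $\epsilon>0$. The fairness-constrained allocation (digitally optimal, or any feasible one) then sends $\lceil n/2\rceil$ users to $t$. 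The users sent to $t$ contribute zero realized utility, while $\mathbf X^{\star}$ sends everyone to $h$ and obtains $U_{\mathrm{real}}=n$.

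\textbf{Making the ratio unbounded rather than merely $2$.} A two-POI instance with $\tau=1$ only yields a factor-$2$ gap, so the construction has to be sharpened. The cleanest way is to use many tail POIs: $m$ tail POIs, all unreachable, and one head POI. Fairness requires each tail POI to receive at least $\tau E(h)$, and the $n$ users are split so that $E(h)+m\tau E(h)\le n$. This forces $E(h)\le n/(1+m\tau)$, so
\[
U_{\mathrm{real}}(\mathbf X^{\mathrm{fair}})\le \frac{n}{1+m\tau},
\qquad
U_{\mathrm{real}}(\mathbf X^{\star})=n,
\]
and the PoF is at least $1+m\tau$, which tends to $\infty$ as $m$ grows, for any fixed $\tau$.

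\textbf{Reaching PoF $=\infty$ outright.} One can also let the head's reachable score be small relative to the tail's digital score, but the counting argument above already suffices. To get an infinite ratio, add the reading that $\bar E$ is computed over the platform's full head set, including head POIs outside the instance's user base. Then $\tau\bar E$ can exceed $n/m$, making feasible fair allocations exclusively unreachable and giving $U_{\mathrm{real}}(\mathbf X^{\mathrm{fair}})=0$.

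\textbf{Main obstacle.} The circularity of $\bar E$ depending on $\mathbf X$ is what caps the two-POI gap, so the real work is in the counting step that keeps $E(h)$ small. I would present the $1+m\tau$ bound as the rigorous core. To close, I would remark that a soft penalty $\lambda\sum_v\max(0,\tau\bar E-E(v))$ with $\lambda$ large drives the continuous relaxation to the same allocation, consistent with Theorem~\ref{thm:intractability}.
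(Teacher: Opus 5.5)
Your argument is correct, but it takes a different route from the paper's. The paper uses two POIs and imposes a constraint forcing a \emph{fraction} $\tau$ of all recommendations onto the unreachable tail POI. This gives $U_{\mathrm{FAIR}}=(1-\tau)M$ and $\mathrm{PoF}=1/(1-\tau)\to\infty$ as $\tau\to 1$.

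You instead keep the exposure constraint $E(v)\ge\tau\bar E$, $\tau\in(0,1]$, exactly as defined in the preliminaries. You correctly note that in a two-POI instance this only forces $E(v_{\mathrm{tail}})\ge\tau E(v_{\mathrm{pop}})$, so the ratio is only about $1+\tau\le 2$. You then get unboundedness from the number of tail providers: with $K=1$ and $m$ unreachable tails, the bound $(1+m\tau)E(h)\le n$ gives $\mathrm{PoF}\ge 1+m\tau$ for every allocation meeting the fairness constraint.

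The paper's version is shorter and yields a closed form, but its share-based constraint is not the stated one. A tail share of $\tau$ means $E(v_{\mathrm{tail}})=\frac{\tau}{1-\tau}\bar E$, i.e.\ a strictness parameter above $1$ once $\tau>1/2$, which is outside the allowed range. Under the literal definition, the paper's two-POI instance gives only the roughly factor-$2$ gap you observed. Your version is faithful to the definition and works for every fixed $\tau$. It also ties the blow-up to the size of the tail, which is the realistic regime since most POIs are tail. The cost is that you get a lower bound on a growing instance rather than an exact value.

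I would drop the paragraph on reaching $\mathrm{PoF}=\infty$ via a reinterpreted $\bar E$. If you want an infinite ratio, integrality already gives it under the literal definition. Take $m\ge n$ unreachable tails. If $E(h)\ge 1$, each tail needs integer exposure $E(t)\ge\tau E(h)>0$, hence $E(t)\ge 1$, so at least $1+m>n$ slots are required. Therefore every fair allocation has $E(h)=0$ and $U_{\mathrm{real}}(\mathbf X^{\mathrm{fair}})=0$. The soft-penalty remark at the end is heuristic and not needed for the theorem.
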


\begin{proof}
Define $\mathrm{PoF}=U_{\mathrm{OPT}}/U_{\mathrm{FAIR}}$, where $U_{\mathrm{OPT}}$ is the maximum realized utility without fairness constraints and $U_{\mathrm{FAIR}}$ is the realized utility under a fairness-enforced policy.

Consider an LBSN with $M$ users, one popular POI $v_{\mathrm{pop}}$, and one tail POI $v_{\mathrm{tail}}\in\mathcal V_{\mathrm{tail}}$.
Assume all users can execute visits to $v_{\mathrm{pop}}$ but none can execute visits to $v_{\mathrm{tail}}$:
$c_{u,v_{\mathrm{pop}}}\le B_u$ and $c_{u,v_{\mathrm{tail}}}>B_u$ for all $u$.
Let $\hat y_{u,v_{\mathrm{pop}}}=1$ and $\hat y_{u,v_{\mathrm{tail}}}=1-\epsilon$ with $\epsilon\to 0^{+}$. Without fairness constraints, recommending $v_{\mathrm{pop}}$ to all users yields:
\begin{equation}
U_{\mathrm{OPT}}= \sum_{u=1}^M 1\cdot \mathbb{I}(c_{u,v_{\mathrm{pop}}}\le B_u)=M.
\end{equation}

Now impose an exposure-based fairness constraint that forces a fraction $\tau\in(0,1)$ of recommendations to be allocated to $v_{\mathrm{tail}}$.
In physical space, those recommendations are infeasible and contribute zero realized utility:
\begin{equation}
U_{\mathrm{FAIR}}^{\mathrm{Physical}}
=
(1-\tau)M\cdot 1 + \tau M\cdot (1-\epsilon)\cdot 0
=
(1-\tau)M.
\end{equation}
Hence,
\begin{equation}
\mathrm{PoF}^{\mathrm{Physical}}
=
\frac{U_{\mathrm{OPT}}}{U_{\mathrm{FAIR}}^{\mathrm{Physical}}}
=
\frac{M}{(1-\tau)M}
=
\frac{1}{1-\tau}
\;\xrightarrow[\tau\to 1]{}\;+\infty.
\end{equation}

Therefore, blindly boosting exposure for tail POIs without feasibility awareness can force ``unexecutable'' recommendations, producing fake exposures that do not convert into visits, and can arbitrarily harm realized utility.
\end{proof}
\section{Methodology}
To address the complex \emph{provider fairness} problem in POI recommendation, we propose \textsc{SPACE} (\textbf{S}upply- and \textbf{P}hysics-\textbf{A}ware \textbf{C}onditional \textbf{E}mbedding generation), a framework that democratizes long-tail POI exposure via \textit{virtual user generation}.
% \textsc{SPACE} increases interactions for tail POIs by generating virtual users that satisfy two key constraints in POI scenarios: \emph{resource supply capacity} on the POI side and \emph{execution feasibility} on the user side.
\textsc{SPACE} consists of three stages:
(1) \emph{Community prototype inference}, which partitions users into different communities according to their inherent constraints;
(2) \emph{Supply-constrained allocation}, which adaptively decides how many virtual users each tail POI should receive and from which communities;
and (3) \emph{Constraint-guided generation}, which produces the final virtual user embeddings conditioned on the target POI and the allocated community.
% Existing recommender models can directly train on these generated user-POI embedding pairs, enabling seamless and model-agnostic integration.
The overall workflow is depicted in \figurename~\ref{fig:SPACE}.
\begin{figure*}[t]
    \centering
    \includegraphics[trim=0cm 4.5cm 0cm 4.5cm, clip, width=0.85\textwidth]{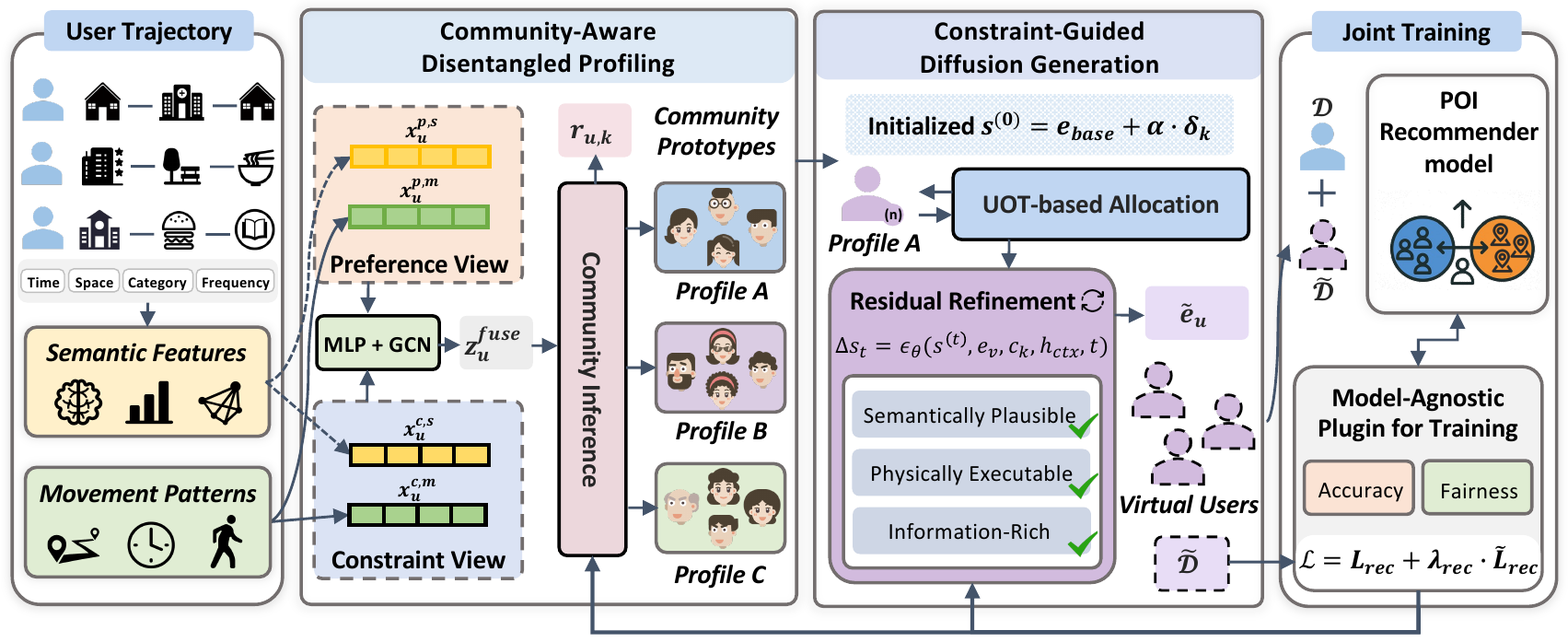}
    \caption{The illustration of our proposed \textsc{SPACE}.}
    \label{fig:SPACE}
\end{figure*}

\subsection{Latent Constraint Discovery through Community Inference}
Inspired by prior community inference studies \cite{zhang2025hacd,gao2025social,chen2025communitydf}, we divide users into different communities according to their inherent constraints. Intuitively, we expect this module to learn communities with the following properties: (1) users assigned to the same community are relatively consistent in constraints; (2) we allow correlations between preference and constraints, but they are not fully entangled in a single community (users with similar preferences may still be separated due to different constraints).
With such communities, our later modules can leverage them to guide the generation of physically plausible virtual users.

\subsubsection{Community Assignment}
To achieve this goal, we propose a novel community inference function.
First, we represent each user from two complementary views.
Given user $u$'s raw behavioral features, we summarize them into:
(i) the \textbf{preference view} $\mathbf{x}_u^p$, which captures the user's preference profile using a normalized POI-category histogram, the total number of visits, the number of unique visited POIs, and an exploration diversity ratio; and
(ii) the \textbf{constraint view} $\mathbf{x}_u^c$, which captures the user's mobility pattern using a normalized hourly activity histogram, a day-of-week histogram, geographic statistics of visited locations, and the average trajectory length. The former mainly reflects historical POI choices (preference), while the latter mainly reflects mobility regularities (constraints). Then, we map the two views into one \textbf{constraint latent} embedding $\mathbf{z}_u^c$, which represents the user's constraints: %characteristics:
\begin{equation}
\mathbf{z}_u^c=\mathrm{MLP}_{c}\!\left([\mathbf{x}_u^c;\mathbf{x}_u^p]\right).
\end{equation}
Notably, we use both $\mathbf{x}_u^p$ and $\mathbf{x}_u^c$ to build $\mathbf{z}_u^c$ because preference and constraints may influence each other. For example, stronger mobility budgets often lead to more exploratory preferences, and ``nightlife'' preferences imply late-hour activity patterns. This design allows us to capture the intricate interplay between constraint and preference signals.

Moreover, collaborative signals are critical in recommendations. Instead of modeling each user independently, we utilize a GNN \cite{kipf2016semi} to provide additional signals:
\begin{equation}
    \mathbf{g}_u^c=\mathrm{GCN}(G,\mathbf{z}_u^{c}),
\end{equation}
where $G$ is a user graph constructed from user interactions. 

Based on this, we fuse multiple sources to obtain the representation for community partition:
\begin{equation}
    \mathbf{z}_u^{fuse}=\mathrm{MLP}_{fuse}\!\left([\mathbf{z}_u^{c};\mathbf{g}_u^{c}]\right).
\end{equation}

Next, we adopt a commonly used soft clustering approach \cite{shchur2019overlapping} to assign users into $K$ communities.
We initialize a set of learnable community prototypes $\mathbf{C} = \{\mathbf{c}_1, \ldots, \mathbf{c}_K\}$.
The posterior soft community assignment is computed as:
\begin{equation}
r_{u,k}=
\frac{\exp\left(\mathrm{sim}(\mathbf{z}_u^{fuse},\mathbf{c}_k)/\tau\right)}
{\sum_{j=1}^{K}\exp\left(\mathrm{sim}(\mathbf{z}_u^{fuse},\mathbf{c}_j)/\tau\right)},
\end{equation}
where $\mathrm{sim}(\cdot,\cdot)$ denotes cosine similarity and $\tau$ is a temperature hyper-parameter.
Users can then be assigned to communities according to these scores.

\subsubsection{Inference Loss}
Although the above scheme is intuitive, there is a critical issue: the recommender's objective is designed for recommendation accuracy rather than directly supervising the community inference task.
Therefore, we design an auxiliary loss to support community inference from the marginal community usage:
\begin{equation}
\tilde{\mathbf r}_k=\frac{1}{|\mathcal U|}\sum_{u\in\mathcal U} r_{u,k}.
\end{equation}
%a loss function 
Then, we propose to learn compact and non-collapsed communities:
\begin{equation}
\arg \min_{\Theta_{\text{CI}}}\ \mathcal{L}_{\text{infer}}
=
-\frac{1}{|\mathcal{U}|}\sum_{u\in \mathcal{U}} \sum_{k=1}^K
r_{u,k}\cdot \frac{\mathrm{sim}(\mathbf{z}_u^{fuse},\mathbf{c}_k)}{\tau}
+
D_{\mathrm{KL}}\!\left(\tilde{\mathbf r}\,\middle\|\,U(1/K)\right),
\end{equation}
where $U(1/K)$ denotes the uniform distribution over $K$ communities. The first term maximizes the expected similarity between each user representation and its assigned prototype(s), thereby forming tight and coherent communities in the latent space.
The second term prevents trivial solutions where all users collapse into a single community by encouraging the \emph{marginal} community distribution $\tilde{\mathbf r}$ to be close to uniform.
Importantly, this regularizer does not enforce hard per-user assignments; it only discourages degenerate global partitions, which is sufficient for maintaining community coverage in downstream quota allocation and executable virtual user generation.

Since this loss is only meant to regularize the community inference process, we restrict parameter updates to community-inference parameters $\Theta_{\text{CI}}$ (i.e., $\mathrm{MLP}_c$, $\mathrm{GCN}$, $\mathrm{MLP}_{fuse}$, and prototypes $\mathbf C$).

\subsection{Supply-Constrained Tail POI Allocation via Unbalanced Optimal Transport}
After obtaining communities, a natural question is: can we immediately start generating virtual users to increase interactions for tail POIs?
The answer is no. If we blindly augment all tail POIs (e.g., generating a fixed number $M$ of virtual users for each tail POI), two serious biases may occur: (1) it violates the \emph{Resource Supply Constraint} because real POIs have physical/service capacity limits; generating too many ``virtual visits'' creates unrealistic demand and may mislead the model to over-recommend niche POIs; (2) it causes distribution distortion: injecting too many positive interactions may turn a tail POI into a ``fake head'' item, biasing the ranking model and harming overall accuracy and user experience.

Thus, we cannot pursue fairness while ignoring practical constraints.
We need a carefully designed allocation strategy that determines how many virtual users each tail POI should receive and from which communities.
We set a supply budget vector $\mathbf{d}$, where each tail POI $v$ should receive around $d_v$ (e.g., historical maximum daily visits) virtual users, drawn from multiple relevant communities.

To achieve this, we design an Optimal Transport (OT) \cite{wu2021tfrom,cuturi2013sinkhorn} based allocation scheme.
Let $\mathbf{P} \in \mathbb{R}_{+}^{|\mathcal{V}_{tail}| \times K}$ be the transport plan, where $P_{v,k}$ indicates the number of virtual embeddings to generate for POI $v$ from community $k$.
To allocate only from relevant communities, we define a cost matrix $\mathbf{Cost}$ that measures the distance between POI embeddings $\mathbf{e}_v$ and community representations $\boldsymbol{\mu}_k$.
The OT objective is:
\begin{equation}
\min_{\mathbf{P} \geq 0} \ \langle \mathbf{P}, \mathbf{Cost} \rangle
+ \underbrace{\mathrm{KL}\left(\mathbf{P} \mathbf{1} \, \middle\| \, \mathbf{d}\right)}_{\text{Supply Constraint}}.
\end{equation}
The first term encourages allocating a POI's quota to more relevant communities: if a community has a lower cost (higher similarity) to the POI, then $P_{v,k}$ tends to be larger.
The second term encourages the total allocation for each POI to be close to $d_v$; deviations are allowed but penalized, and larger deviations incur larger penalties.
With this allocation, each tail POI receives a reasonable mixture of communities and is pulled back to its budget $d_v$, avoiding becoming a ``fake head'' POI.
For example, a niche art museum will not be forced to match an irrelevant ``scenery-preferring tourist'' community because that would incur a high cost, while its total quota will be kept around $d_v$ rather than exploding unrealistically.

\subsection{Constraint-Guided Latent Diffusion for Virtual User Embeddings}
\label{sec:diffusion}
With the community and quota allocation determined, we generate \textbf{Virtual User Embeddings} $\tilde{\mathbf{e}}_u$ for each tail POI $v$.
Direct augmentation strategies often fail to satisfy the coupled constraints: generated users must be (i) \emph{semantically plausible} (aligned with POI $v$ and community $k$), (ii) \emph{information-rich} (avoiding mode collapse or meaningless noise), and (iii) \emph{physically executable} (respecting the community's execution characteristics).
To address these challenges, we propose a \textbf{POI-conditional latent generator} inspired by diffusion models \cite{ho2020denoising,liu2026diffgrm}.
Instead of sampling from isotropic noise, we perform \emph{iterative residual denoising} starting from a reliable initialization, decomposing the complex generation process into multiple small, constraint-aware refinements.

\subsubsection{Prototype-Aware Initialization}
A robust generation process requires a starting point that already lies in a plausible region of the target community.
For a given pair $(v,k)$, we construct a community-aware base state by fusing the POI representation $\mathbf{h}_v$ with the community prototype $\mathbf{c}_k$:
\begin{equation}
\mathbf{e}_{\text{base}} = \mathrm{MLP}_{\text{base}}\!\left([\mathbf{h}_v; \mathbf{c}_k]\right).
\end{equation}
Here, $\mathbf{c}_k$ serves as the sole community prototype, encoding the execution characteristics of community $k$ learned in the community inference stage.

To inject diversity grounded in real user variations, we sample a real user $u$ from community $k$ and compute its deviation from the community center in the constraint latent space:
\begin{equation}
\delta_k = \mathbf{z}_u^c - \bar{\mathbf{z}}_k^c, \quad 
\bar{\mathbf{z}}_k^c = \mathbb{E}[\mathbf{z}_j^c \mid j \in k], \quad
u \sim \text{Unif}(\{j: j \in k\}).
\end{equation}
The initial latent state is defined as:
\begin{equation}
\mathbf{s}^{(0)} = \mathbf{e}_{\text{base}} + \alpha \cdot \delta_k,
\end{equation}
where $\alpha$ controls exploration strength. This initialization anchors generation around $(v,k)$ while exploring along \emph{real} intra-community directions, avoiding the need to learn feasibility from scratch.

\subsubsection{Iterative Residual Refinement}
We introduce a conditional denoising network $\epsilon_\theta$ to iteratively refine the latent state $\mathbf{s}^{(t)}$ for $t=0,\dots,T-1$.
At each step, the network predicts a residual update conditioned on the current state, POI context, community prototype, and batch-level statistics:
\begin{equation}
\Delta \mathbf{s}_t = \epsilon_\theta\!\left(\mathbf{s}^{(t)},\, \mathbf{e}_v,\, \mathbf{c}_k,\, \mathbf{h}_{\text{ctx}},\, t\right),
\end{equation}
where $\mathbf{h}_{\text{ctx}}$ summarizes the distribution of real user embeddings in the current batch (e.g., mean pooling), preventing the generated embeddings from drifting away from the global latent geometry.
We update the state with a constant step size $\eta_t = 1/T$:
\begin{equation}
\mathbf{s}^{(t+1)} = \mathbf{s}^{(t)} + \eta_t \Delta \mathbf{s}_t.
\end{equation}
After $T$ refinement steps, the final virtual user embedding is obtained as:
\begin{equation}
\tilde{\mathbf{e}}_u = \mathbf{s}^{(T)}.
\end{equation}

In addition, we require the generated virtual users to remain consistent with the constraint characteristics of their source community.
We impose the following execution regularization:
\begin{equation}
\arg \min_{\Theta_{\text{Gen}}} \mathcal L_{\text{exec}}
=
\mathrm{ReLU}\!\left(\|\tilde{\mathbf e}_u -\mathbf{e}_u\|_2^2 - R_k^2\right),
\end{equation}
Here, $R_k$ is computed by a robust statistic:
\begin{equation}
R_k^2 \;=\; \operatorname{Quantile}_{q}\Big(\big\{\|\mathbf{e}_u-\bar{\mathbf{e}}_k\|_2^2:\; u\in k\big\}\Big),
\end{equation}
where $\bar{\mathbf e}_k$ denotes the community center.
This loss is essentially a ``community constraint'': if the generated user is close to community $k$, the generation has more freedom; otherwise, it receives a strong penalty, pushing the generation back to the community manifold and reducing implausible samples that violate community characteristics. Since this loss is only meant for the generation process, we restrict parameter updates to generation parameters $\Theta_{\text{Gen}}$.

\subsection{Training}
After generation, we pair the virtual users with their corresponding tail POIs to form a new dataset $\tilde{\mathcal{D}}$, and train together with the original dataset $\mathcal{D}$:
\begin{equation}
\mathcal{L} =
\sum_{(u, i) \in \mathcal{D}} L_{\text{rec}}(u,i) 
+ \lambda_{\text{rec}}\sum_{(u, i) \in \tilde{\mathcal{D}}} L_{\text{rec}}(u,i),
\end{equation}
where $\lambda_{\text{rec}}$ balances the contribution of the augmented data.
\section{Experiments}
In this section, we conduct comprehensive experiments to evaluate the performance of SPACE. Specifically, we evaluate the effectiveness of integrating SPACE in enhancing long-tail POIs' exposure across various next POI recommendation models (\ref{Sec. 4.2}). We analyze the contribution of individual SPACE components to model performance (\ref{Sec. 4.3}). We also investigate the impact of different parameter settings on SPACE’s efficacy (\ref{Sec. 4.4}) and assess its efficiency for practical applications (\ref{Sec. 4.5}). Additionally, we supplement the results on extra fairness metrics (\ref{Sec. 4.6}) and provide case studies (\ref{Sec. 4.7}).  
\subsection{Experimental Setting}
\subsubsection{Datasets.} We conduct experiments on three publicly POI datasets collected from location-based service platforms: NYC, TKY \cite{yang2014modeling}, and CA \cite{yuan2013time}. All three datasets collect check-ins from real-world cities. The statistics are summarized in \tablename~\ref{tab:data}.
\begin{table}
\setlength{\tabcolsep}{6pt}
    \footnotesize
  \centering
  \caption{Statistics of the POI recommendation datasets.}
  \vspace{-4pt}
  \label{tab:data}
  \begin{tabular}{cccccc}
    \toprule
    Dataset&\#User&\#Poi&\#Cat&\#Check-in&\#Trajectory\\
    \midrule
    NYC & 1,075 & 5,099 & 318 & 104,074 & 14,160\\
    TKY & 2,281 & 7,844 & 291 & 361,430 &44,692\\
    CA & 4,318 & 9,923 &301 &250,780 &32,920\\
  \bottomrule
\end{tabular}
\end{table}
\subsubsection{Baseline.} To validate the effectiveness of our solution, we integrate SPACE with a wide range of representative and state-of-the-art methods. \textbf{FPMC} \cite{rendle2010factorizing} and \textbf{LSTPM} \cite{zhao2020go} utilize Matrix Factorization and LSTM, respectively, to directly model user check-in sequences. \textbf{GETNext} \cite{yang2022getnext} employs a graph-enhanced Transformer on a user-agnostic global trajectory flow map to capture collaborative signals. \textbf{MTNet} \cite{huang2024learning} and \textbf{DiffuRec} \cite{li2023diffurec} employ generative techniques or diffusion frameworks to capture the distribution of potential next POIs. 
\subsubsection{Evaluation metrics.} To evaluate accuracy employing standard metrics for top-K recommendations, we encompass hit rate ($\text{HR@}K$) and normalized discounted cumulative gain ($\text{NDCG@}K$). For fairness measurement, we use cold-warm group fairness ($\text{CGF@}K$), cold POI exposure ($\text{CE@}K$), and long-tail coverage (LTC) to evaluate the inequality in recommendation quality between warm POIs and long-tail POIs. 
\subsubsection{Implementation Details.}  
To ensure reproducibility, we implement all methods and apply pre-processing, dataset split, and evaluation using the POI recommendation standard settings \cite{yang2022getnext}. We construct a popularity ranking by frequency, with the top 20\% of POIs considered popular and the bottom 80\% long-tail. We meticulously tune all hyperparameters on the validation sets and report the best performance for all baseline models.
\subsection{Overall Performance Comparison}\label{Sec. 4.2}
We report the overall performance results in \tablename~\ref{tab:overall}. The key observations include:
\begin{itemize}
    \item Integrating SPACE into diverse backbones leads to substantial improvements in fairness metrics across nearly all datasets. This demonstrates the effectiveness of SPACE in increasing the exposure opportunities of long-tail POIs and promotes fairer recommendations. Such enhancements not only bolster the economic viability of niche merchants but also enrich user satisfaction by offering more diverse options. 
    \item Beyond fairness, SPACE also yields significant performance gains on most datasets and metrics. For instance, GetNext shows a 54.85\% improvement on CA. This illustrates that our synthesized virtual users effectively adhere to the dual constraints, thereby rectifying exposure bias while simultaneously elevating overall recommendation quality.
    \item Notably, the fairness improvement for LSTPM is more modest compared to other baselines. This is primarily because LSTPM inherently mitigates exposure bias by explicitly modeling long- and short-term interests alongside sequential transition patterns.
\end{itemize}
\begin{table*}[t]
\centering
\caption{Experimental results of different methods with (w) or without (w/o) our SPACE plugin (p <= 0.05).}
\vspace{-4pt}
\label{tab:overall}
\scriptsize
\setlength{\tabcolsep}{4pt}
\renewcommand{\arraystretch}{0.8}
\resizebox{\textwidth}{!}{%
\begin{tabular}{ll*{10}{c}}
\toprule
\multirow{2}{*}{Dataset} & \multirow{2}{*}{Metric} &
\multicolumn{2}{c}{FPMC} &
\multicolumn{2}{c}{LSTPM} &
\multicolumn{2}{c}{GETNext} &
\multicolumn{2}{c}{MTNet} &
\multicolumn{2}{c}{DiffuRec}\\
\cmidrule(lr){3-4}\cmidrule(lr){5-6}\cmidrule(lr){7-8}\cmidrule(lr){9-10}\cmidrule(lr){11-12}
& & w/o & w & w/o & w & w/o & w & w/o & w & w/o & w  \\
\midrule

\multirow{10}{*}{NYC}
& HR@1 $\uparrow$ &0.0963  &\textbf{0.1135}    &0.1911  &\textbf{0.1926}  &0.2219  &\textbf{0.2791}  &0.2139  &\textbf{0.2257}  &0.2127  &\textbf{0.2147}  \\
& HR@10 $\uparrow$ &0.4922  & \textbf{0.4983}    &0.5574  &0.5544  &0.5676  &\textbf{0.6638}  &0.5381  &\textbf{0.5537}  &0.5378  &0.5358 \\
& NDCG@10 $\uparrow$ &0.2742  &\textbf{0.2838}    &0.3608  &\textbf{0.3610}  &0.3790  &\textbf{0.4621}  &0.3671  &\textbf{0.3836}  &0.3752  &0.3730\\
\cmidrule(lr){2-12}
& Accuracy Improvement
& \multicolumn{2}{c}{\textbf{7.53\%}}
& \multicolumn{2}{c}{\textbf{0.10\%}}
& \multicolumn{2}{c}{\textbf{21.55\%}}
& \multicolumn{2}{c}{\textbf{4.30\%}}
& \multicolumn{2}{c}{-0.01\%}\\
\cmidrule(lr){2-12}
& CGF@1 $\downarrow$ &0.0521  &\textbf{0.0456}   &0.0824  &\textbf{0.0814}  &0.0372  &\textbf{0.0295}  &0.0667  &\textbf{0.0618}  &0.0200  &\textbf{0.0178}  \\
& CGF@10 $\downarrow$ &0.0448  &\textbf{0.0415}   &0.0738  &\textbf{0.0718}  &0.0265  &\textbf{0.0186}  &0.0621  &\textbf{0.0561}  &0.0206  &\textbf{0.0200}  \\
& CE@1 $\uparrow$ &0.0884  &\textbf{0.0949}   &0.0457  &\textbf{0.0469}  &0.0188  &\textbf{0.0209}  &0.0091  &\textbf{0.0104}  &0.0070  &\textbf{0.0081}  \\
& CE@10 $\uparrow$ &0.4618  &\textbf{0.4725}  &0.2494  &\textbf{0.2535}  &0.0999  &\textbf{0.1100}  &0.0470  &\textbf{0.0543}  &0.0317  &\textbf{0.0357}  \\
& LTC $\uparrow$ &0.5469  &\textbf{0.5858}   &0.5303  &\textbf{0.5339}  &0.7306  &\textbf{0.7686}  &0.3730  &\textbf{0.4234}  &0.6913  &\textbf{0.7284}  \\
\cmidrule(lr){2-12}
& Fairness Improvement
& \multicolumn{2}{c}{\textbf{7.32\%}}
& \multicolumn{2}{c}{\textbf{1.77\%}}
& \multicolumn{2}{c}{\textbf{15.40\%}}
& \multicolumn{2}{c}{\textbf{12.06\%}}
& \multicolumn{2}{c}{\textbf{9.52\%}}\\
\midrule

\multirow{10}{*}{TKY}
& HR@1 $\uparrow$ &0.1170  &0.1131    &0.2316  &\textbf{0.2325}  &0.1457  &\textbf{0.1549}  &0.2153  &\textbf{0.2159}  &0.1541  &\textbf{0.1946} \\
& HR@10 $\uparrow$ &0.4482  &0.4437    &0.5599  &\textbf{0.5622}  &0.3950  &\textbf{0.4274}  &0.5068  &\textbf{0.5150}  &0.4807  &\textbf{0.4959} \\
& NDCG@10 $\uparrow$ &0.2656  &0.2628    &0.3866  &\textbf{0.3881}  &0.2627  &\textbf{0.2803}  &0.3551  &\textbf{0.3591}  &0.3050  &\textbf{0.3317} \\
\cmidrule(lr){2-12}
& Accuracy Improvement
& \multicolumn{2}{c}{-1.79\%}
& \multicolumn{2}{c}{\textbf{0.4\%}}
& \multicolumn{2}{c}{\textbf{7.07\%}}
& \multicolumn{2}{c}{\textbf{1.01\%}}
& \multicolumn{2}{c}{\textbf{12.73\%}} \\
\cmidrule(lr){2-12}
& CGF@1 $\downarrow$ &0.0591  &\textbf{0.0581}    &0.0603  &\textbf{0.0596}  &0.0602  &\textbf{0.0562}  &0.0597  &\textbf{0.0503}  &0.0392  &\textbf{0.0365} \\
& CGF@10 $\downarrow$ &0.0544  &\textbf{0.0536}    &0.0567  &\textbf{0.0561}  &0.0593  &\textbf{0.0579}  &0.0587  &\textbf{0.0523}  &0.0394  &\textbf{0.0380} \\
& CE@1 $\uparrow$ &0.0342  &\textbf{0.0366}    &0.0503  &\textbf{0.0531}  &0.0027  &\textbf{0.0079}  &0.0037  &\textbf{0.0041}  &0.0067  &\textbf{0.0080} \\
& CE@10 $\uparrow$ &0.2977  &\textbf{0.2984}    &0.0300  &\textbf{0.0313}  &0.0214  &\textbf{0.0271}  &0.0237  &\textbf{0.0252}  &0.0305  &\textbf{0.0345} \\
& LTC $\uparrow$ &0.3700  &\textbf{0.3798}    &0.7417  &\textbf{0.7451}  &0.2144  &\textbf{0.3432}   &0.2361  &\textbf{0.2441}  &0.6824  &\textbf{0.7176} \\
\cmidrule(lr){2-12}
& Fairness Improvement
& \multicolumn{2}{c}{\textbf{2.61\%}}
& \multicolumn{2}{c}{\textbf{2.52\%}}
& \multicolumn{2}{c}{\textbf{57.66\%}}
& \multicolumn{2}{c}{\textbf{9.44\%}}
& \multicolumn{2}{c}{\textbf{9.62\%}}\\
\midrule

\multirow{10}{*}{CA}
& HR@1 $\uparrow$ &0.0693  &\textbf{0.0752}  &0.1135    &\textbf{0.1158}  &0.1370  &\textbf{0.1890}  &0.0992  &\textbf{0.1162}  &0.1415  &\textbf{0.1677} \\
& HR@10 $\uparrow$ &0.2682  &0.2653  &0.2975   &\textbf{0.3024}  &0.3613  &\textbf{0.6059}  &0.3001  &\textbf{0.3084}  &0.3599  &\textbf{0.3662}  \\
& NDCG@10 $\uparrow$ &0.1606  &\textbf{0.1608}  &0.1975    &\textbf{0.1999}  &0.2379  &\textbf{0.3780}  &0.2063  &0.2030  &0.2416  &\textbf{0.2553}  \\
\cmidrule(lr){2-12}
& Accuracy Improvement
& \multicolumn{2}{c}{\textbf{2.52\%}}
& \multicolumn{2}{c}{\textbf{2.30\%}}
& \multicolumn{2}{c}{\textbf{54.85\%}}%757
& \multicolumn{2}{c}{\textbf{6.10\%}}
& \multicolumn{2}{c}{\textbf{8.64\%}} \\
\cmidrule(lr){2-12}
& CGF@1 $\downarrow$ &0.0269   &0.0275  &0.0421  &\textbf{0.0410}  &0.0473  &\textbf{0.0284}  &0.0452  &\textbf{0.0326}  &0.0218  &0.0253 \\
& CGF@10 $\downarrow$ &0.0255   &\textbf{0.0237}  &0.0383  &\textbf{ 0.0378}  &0.0497  &\textbf{0.0311}  &0.0458  &\textbf{0.0371}  &0.0194  &\textbf{0.0188}  \\
& CE@1 $\uparrow$ &0.0554  &\textbf{0.0673}    &0.0287  &\textbf{0.0307}  & 0.0072  &\textbf{0.0104}  &0.0024  &\textbf{0.0039}  &0.0042  &0.0036 \\
& CE@10 $\uparrow$ &0.2656  &\textbf{0.3481}    &0.1613  &\textbf{0.1654}  &0.0105  &\textbf{0.0419}  &0.0162  &\textbf{0.0209}  &0.0207  &\textbf{0.0330} \\
& LTC $\uparrow$ &0.3126  &\textbf{0.3443}  &0.6192    &\textbf{0.6444}  &0.3120  &\textbf{0.4672}  &0.1740  &\textbf{0.2103}  &0.5594  &\textbf{0.6081} \\
\cmidrule(lr){2-12}
& Fairness Improvement
& \multicolumn{2}{c}{\textbf{13.50\%}}
& \multicolumn{2}{c}{\textbf{3.50\%}}
& \multicolumn{2}{c}{\textbf{94.11\%}}
& \multicolumn{2}{c}{\textbf{31.85\%}}
& \multicolumn{2}{c}{\textbf{8.17\%}} \\
\bottomrule
\end{tabular}%
}
\end{table*}
\vspace{-2mm}
\subsection{Ablation Study}\label{Sec. 4.3}
To analyze the roles of different components, we conduct ablation studies on NYC and CA datasets. As similar conclusions are drawn across all backbone models, we present only the GETNext results in \tablename~\ref{tab:ablation on NYC} and \tablename~\ref{tab:ablation on CA}. Specifically, ``\textit{w/o} comm'' removes community inference, ``\textit{w/o} gen'' removes generation process, ``\textit{w/o} $\mathcal{L}_{\text{infer}}$'' denotes SPACE without community updating, and ``\textit{w/o} $\mathbf{s}^{(0)}$'' denots SPACE without physics anchor. We observe that removing any component leads to performance degradation, confirming the necessity of each module for both accuracy and fairness. On the denser NYC dataset, community prototypes and physical anchors serve as vital priors-their absence significantly impairs fairness. Conversely, virtual user generation is more effective for CA, mitigating its extreme sparsity and severe long-tail bias.
\begin{table}[t]
\centering
\caption{Ablation study on NYC.}
\vspace{-4pt}
\label{tab:ablation on NYC}
\setlength{\tabcolsep}{5pt}
\renewcommand{\arraystretch}{0.8}
\begin{tabular}{lccccc}
\toprule
\multirow{2}{*}{\textbf{Method}} &
\multicolumn{2}{c}{Accuracy $\uparrow$} &
\multicolumn{3}{c}{Fairness $\uparrow$} \\ 
\cmidrule(lr){2-3} \cmidrule(lr){4-6}
& \textbf{HR@1} & \textbf{HR@10} &\textbf{CE@1} & \textbf{CE@10} & \textbf{LTC}\\
\midrule
\textit{w/o} comm &0.2656  &0.6093    &0.0144  &0.0600  &0.5341 \\
\textit{w/o} gen     &0.2710  &0.6281   &0.0146  &0.0645  &0.5577  \\
\textit{w/o} $L_{\text{infer}}$       &0.2647  &0.6160    &0.0136  &0.0480  &0.4671 \\
\textit{w/o} $\mathbf{s}^{(0)}$ &0.2697  &0.6221     &0.0141  &0.0515  &0.4824 \\
\textbf{SPACE} & \textbf{0.2791} & \textbf{0.6638} & \textbf{0.0209} & \textbf{0.1100} & \textbf{0.7686}\\
\bottomrule
\end{tabular}
\end{table}
\begin{table}[t]
\centering
\caption{Ablation study on CA.}
\vspace{-4pt}
\label{tab:ablation on CA}
\setlength{\tabcolsep}{5pt}
\renewcommand{\arraystretch}{0.8}
\begin{tabular}{lccccc}
\toprule
\multirow{2}{*}{\textbf{Method}} &
\multicolumn{2}{c}{Accuracy $\uparrow$} &
\multicolumn{3}{c}{Fairness $\uparrow$} \\ 
\cmidrule(lr){2-3} \cmidrule(lr){4-6}
& \textbf{HR@1} & \textbf{HR@10} &\textbf{CE@1} & \textbf{CE@10} & \textbf{LTC}\\
\midrule
\textit{w/o} comm &\textbf{0.1906}  &0.5543    &0.0102  &0.0371  &0.4217 \\
\textit{w/o} gen     &0.1843  &0.5618   &0.0096  &0.0331  &0.4021  \\
\textit{w/o} $L_{\text{infer}}$       &0.1893  &0.5988   &\textbf{0.0106}  &0.0402  &0.4270 \\
\textit{w/o} $\mathbf{s}^{(0)}$  &0.1862  &0.5974    &0.0102  &0.0381  &0.4190 \\
\textbf{SPACE} & 0.1890 & \textbf{0.6059} & 0.0104 & \textbf{0.0419} & \textbf{0.4672}\\
\bottomrule
\end{tabular}
\end{table}
\vspace{-2mm}
\subsection{Hyperparameter Study}\label{Sec. 4.4}
We investigate the effect of different values of $\lambda_{virtual}$, $\alpha$, and $\lambda_{rec}$ on NYC, integrated with DiffuRec and GETNext in \figurename~\ref{fig: hyperparameter study}. The parameter $\lambda_{virtual}$ controls the loss of the community inference. With the driving of the training compensation signal, communities are divided effectively, and the performance is improved. But when $\lambda_{virtual}$ is too large, community inference is easily biased by the auxiliary signal, leading to a decline. The parameter $\alpha$ controls the disturbance of selecting physical anchors, which determines the balance between ``diversity'' and ``authenticity'' generated by virtual users. 
The trend of initial improvement followed by a decline as $\alpha$ increases illustrates that the generation mechanism needs moderate disturbance, rather than random disturbance. The parameter $\lambda_{rec}$ controls the weight of the virtual loss $L_{\text{virtual}}$ in total loss. A small $\lambda_{rec}$ underutilizes the fairness-aware signal, whereas a large $\lambda_{rec}$ may dominate training and hurt recommendation accuracy. Overall, SPACE remains stable across a broad range of hyperparameter settings, and its best performance is achieved under moderate parameter values.
\begin{figure}[t]
    \centering
    \vspace{-3mm} 
    \subfigure[Diffurec w/ SPACE]{
    \begin{minipage}[b]{1.0\linewidth}
    \centering
    \includegraphics[scale=0.53]{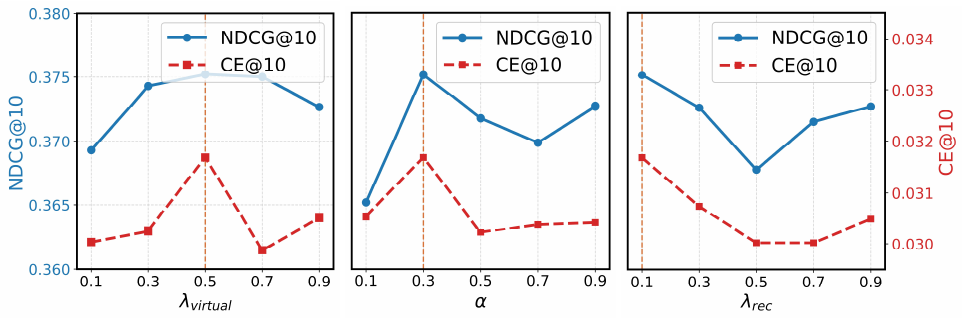}
    \end{minipage}
    }
    \vspace{-3mm}
    \subfigure[GETNext w/ SPACE]{
    \begin{minipage}[b]{1.0\linewidth}
    \centering  
    \includegraphics[scale=0.53]{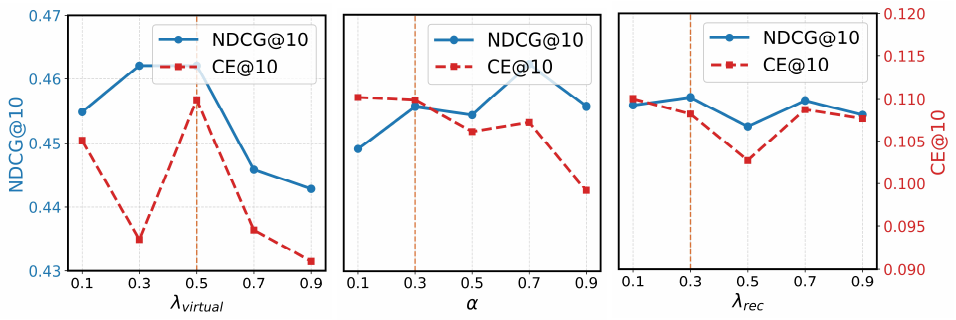}
    \end{minipage}
    }
\caption{Impact of hyperparameters on SPACE.}
\label{fig: hyperparameter study}
\end{figure}
\vspace{-2mm}
\subsection{Efficiency Analysis}\label{Sec. 4.5} 
\figurename~\ref{fig:Training Time} shows the computational overhead introduced by SPACE. Notably, SPACE introduces negligible additional overhead. This efficiency stems from the fact that once the community partitions are established, the generation of virtual users is a straightforward process. Given its ability to enhance performance and fairness, SPACE is well-suited for real-world applications.
\subsection{Supplement}\label{Sec. 4.6}
We also conduct experiments on two other metrics, NC@$K$ and GINI, as shown in \tablename~\ref{tab:supplement}. With the help of our SPACE, backbone models also achieve superior fairness performance.
%\vspace{-4mm}
\subsection{Case Study}\label{Sec. 4.7} 
In this section, we illustrate how our method affects recommendation behaviors for constrained users in \figurename~\ref{fig:case study}. Specifically, we consider three representative user types, including low-mobility commuters, nighttime users relying on public transportation, and exploration-oriented tourists, and report the Top-10 recommendation results of GETNext with and without SPACE. Obviously, with the help of SPACE, the recommendation results are generally closer to the users' real activity range and have better coverage of long-tail POIs, without deviating from the center of interest.
\begin{table}[t]
\centering
\caption{NC@$K$ measures the proportion of long-tail POI and the GINI measures the imbalance of global exposure.}
\vspace{-4pt}
\label{tab:supplement}
\setlength{\tabcolsep}{9pt}
\renewcommand{\arraystretch}{0.8}
\begin{tabular}{lcccc}
\toprule
\multirow{2}{*}{\textbf{Metric}} &
\multicolumn{2}{c}{GETNext} &
\multicolumn{2}{c}{MTNet} \\ 
\cmidrule(lr){2-3} \cmidrule(lr){4-5}
& w/o & w & w/o & w \\
\midrule
NC@1 $\uparrow$ &0.0356  &\textbf{0.1013}  &0.0492   &\textbf{0.0536}  \\
NC@10 $\uparrow$  &0.0673  &\textbf{0.0681}  &0.0726   &\textbf{0.0775}  \\
GINI $\downarrow$  &0.9236  &\textbf{0.9083}  &0.9210   &\textbf{0.9183}  \\
\midrule
Improvement
& \multicolumn{2}{c}{\textbf{62.46\%}}
& \multicolumn{2}{c}{\textbf{15.99\%}} \\
\bottomrule
\end{tabular}
\end{table}
\section{Related Work}

\subsection{Next POI Recommendation}

Early studies regard the visit history of a user as a sequence. Traditional methods based on Markov chains (MC) and matrix factorization (MF) utilize check-in sequences \cite{he2016inferring} and user-POI interactions \cite{salakhutdinov2008bayesian} for recommendation. FPMC \cite{rendle2010factorizing} integrates MF with MC to capture user preferences. 
% Subsequent work introduces RNNs and LSTMs to enhance the representation of check-in features and improve the understanding of user preferences. 
SASRec \cite{kang2018self} combines attention mechanisms with RNNs. HST-LSTM \cite{kong2018hst} and LSTPM \cite{zhao2020go} introduce LSTM to learn the influence of spatio-temporal or long-term information. STAN \cite{luo2021stan} further models non-adjacent check-ins through spatio-temporal interactions. To enhance the understanding of the unique spatio-temporal dependencies in POI scenes, recent studies explore diffusion models \cite{li2023diffurec,qin2023diffusion,zuo2024diff}, graph-based modeling \cite{yang2022getnext,yan2023spatio,yang2024siamese}, and LLM-based models \cite{li2024large,zhong2025comapoi}. Diff-POI \cite{qin2023diffusion} samples from a distribution with diffusion algorithm to recommend POIs in novel areas. GETNext \cite{yang2022getnext} and POIGDE \cite{yang2024siamese} integrate global trajectory flow map and the continuous evolution of user interests from graph perspective. LLM4POI \cite{li2024large} and CoMaPOI \cite{zhong2025comapoi} propose to integrate LLMs in POI recommendation tasks to address the limitations in semantic information fusion. 
% Although these methods contribute to improving recommendation performance, they overlook a crucial issue in POI scenarios: provider fairness. 

\vspace{-4mm}
\subsection{Provider Fairness in Recommendation}
Researchers \cite{zheng2017multi,CAPRI-FAIR} have recognized that recommendation platforms involve multiple stakeholders, including users \cite{chen2025leave,rampisela2025stairway}, item providers \cite{Multi-Objective,guo2025enhancing}, and the platform itself. As recommender systems increasingly cater to user intentions, a thought-provoking phenomenon has emerged: exposure bias\cite{guo2024configurable}, leading to unfair treatment across providers \cite{singh2018fairness,burke2018balanced}. To ensure provider fairness, some studies focus on mitigating popularity biases \cite{liu2025agsrec}, attempting to alleviate these biases by promoting long-tail items \cite{schnabel2016recommendations,wei2023collaborative} or applying causal interventions \cite{wang2021deconfounded,zhang2021causal}, thereby indirectly improving provider fairness. Another approach explicitly models fairness by introducing fairness measurement \cite{guo2024configurable} into the recommendation process, such as regularization-based techniques \cite{rhee2022countering,zhu2021popularity} and post-processing frameworks \cite{zhu2021fairness}. Recently, researchers propose to use LLMs and Agents to achieve provider fairness. FairAgent \cite{guo2025enhancing} builds a fairness enhancement framework based on reinforcement learning, using fairness metrics as rewards. 

% However, due to the unique constraints of POI scenarios, it is not feasible to directly apply provider fairness algorithms to protect disadvantaged POIs.
\begin{figure}[t]
    \centering
    \scalebox{1}[0.9]{%
    \includegraphics[height=4.5cm]{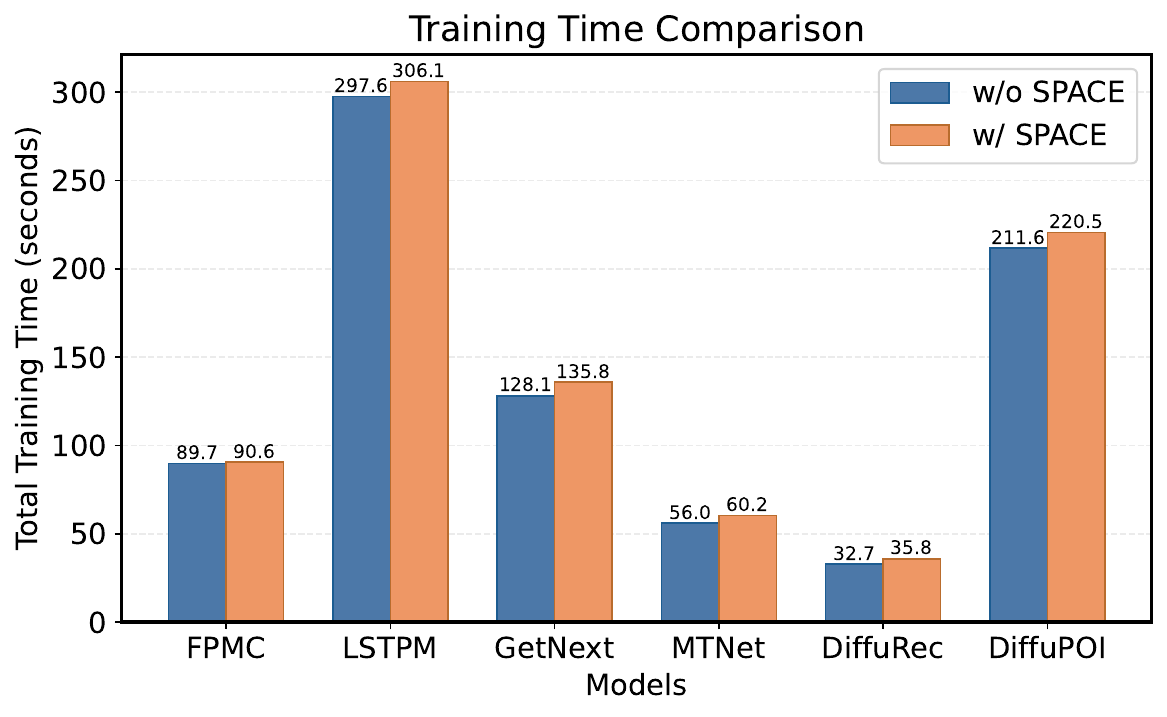}}
    \caption{Training time of models on the NYC dataset, for comparison with and without SPACE.}
    \label{fig:Training Time}
\end{figure}

\vspace{-3mm}
\section{Conclusion}
Unlike conventional settings, POI provider fairness is uniquely shaped by \emph{user execution constraints} and \emph{resource supply constraints}; ignoring them can yield superficial fairness gains at the expense of accuracy. 
In this work, we proposed \textsc{SPACE} (\textbf{S}upply- and \textbf{P}hysics-\textbf{A}ware \textbf{C}onditional \textbf{E}mbedding generation), a model-agnostic framework that democratizes long-tail POI exposure through constraint-respecting virtual user generation. \textsc{SPACE} first infers user communities to capture heterogeneous execution characteristics, then performs supply-constrained quota allocation for tail POIs via unbalanced optimal transport, and finally generates POI-conditional virtual user embeddings using constraint-guided latent diffusion. Extensive experiments on three real-world datasets validate the effectiveness of \textsc{SPACE}: it consistently enhances provider fairness for a range of backbone recommenders without sacrificing accuracy, and in most cases yields additional accuracy gains.

\begin{acks}
This work is supported by the National Natural Science Foundation of China under Grants No. 62572119 and 62232004, Jiangsu Provincial Key Laboratory of Network and Information Security under Grants No.BM2003201, Key Laboratory of Computer Network and Information Integration of Ministry of Education of China under Grants No.93K-9, the Fundamental Research Funds for the Central Universities No.2242026K30049, and partially supported by Collaborative Innovation Center of Novel Software Technology and Industrialization, Collaborative Innovation Center of Wireless Communications Technology. We also thank the Big Data Computing Center of Southeast University for providing the experiment environment and computing facility.
\end{acks}

%%
%% The next two lines define the bibliography style to be used, and
%% the bibliography file.
\bibliographystyle{ACM-Reference-Format}
\bibliography{SPACE}

%%
%% If your work has an appendix, this is the place to put it.
%\appendix

\end{document}